\DeclareSymbolFont{bbold}{U}{bbold}{m}{n}
\DeclareSymbolFontAlphabet{\mathbbold}{bbold}
\newtheorem{proposition}{Proposition} 
\newcommand{\Ngon}[2][]{\vcenter{\hbox{\begin{tikzpicture}
\node[regular polygon,regular polygon sides=#2,draw,minimum size=1cm,#1](#2-gon){};
\foreach \X in {1,...,#2}{\fill (#2-gon.corner \X) circle[radius=2pt];}
\end{tikzpicture}}}}
\newcommand{\Ngondouble}[2][]{\vcenter{\hbox{\begin{tikzpicture}
\node[regular polygon,regular polygon sides=#2,draw, double, minimum size=1cm,#1](#2-gon){};
\foreach \X in {1,...,#2}{\fill (#2-gon.corner \X) circle[radius=2pt];}
\end{tikzpicture}}}}
\title{A Probability Density Theory for Spin-Glass Systems}
\author{%
  Gavin S. Hartnett
  \\
  The RAND Corporation\\
  \texttt{hartnett@rand.org} \\
  \And
  Masoud Mohseni \\
  Google Research\\
  \texttt{mohseni@google.com}
}
\begin{document}

\maketitle

\begin{abstract}
    Spin-glass systems are universal models for representing many-body phenomena in statistical physics and computer science. High quality solutions of NP-hard combinatorial optimization problems can be encoded into low energy states of spin-glass systems. In general, evaluating the relevant physical and computational properties of such models is difficult due to critical slowing down near a phase transition. Ideally, one could use recent advances in deep learning for characterizing the low-energy properties of these complex systems. Unfortunately, many of the most promising machine learning approaches are only valid for distributions over continuous variables and thus cannot be directly applied to discrete spin-glass models. To this end, we develop a continuous probability density theory for spin-glass systems with arbitrary dimensions, interactions, and local fields. We show how our formulation geometrically encodes key physical and computational properties of the spin-glass in an instance-wise fashion without the need for quenched disorder averaging. We  show that our approach is beyond the mean-field theory and identify a transition from a convex to non-convex energy landscape as the temperature is lowered past a critical temperature. We apply our formalism to a number of spin-glass models including  the Sherrington-Kirkpatrick (SK) model, spins on random Erdős-Rényi graphs, and random restricted Boltzmann machines.
\end{abstract}

\section{Introduction}
Spin-glasses are a general class of models which can be used to study complexity in physics, chemistry, biology, computer science, and social sciences \cite{SteinBook}. They also provide a theoretical and phenomenological framework to analyze hard real-world problems in discrete optimization and probabilistic inference over graphical models \cite{MezardBook}. At the heart of such complex phenomena is the emergent behavior that can occur when disordered systems contain many particles, variables, or agents which exert two-body or higher order interactions. Today, there is a fundamental gap in our knowledge of how such non-trivial correlations emerge at low temperatures. For these systems there is a sudden increase in correlations, occurring simultaneously at various scales (up to the overall system size), as the temperature is reduced below a critical threshold. After half a century of intense study, it is not yet fully understood why, or under what conditions, a distribution of small to large clusters of variables can become rigid or frozen below a critical point in a hierarchical or multi-scale fashion in the absence of any obvious symmetries. Additionally, the relaxation time-scales grow exponentially large as a function of the correlation length-scales, which in the worst-case prevents the system from achieving equilibrium in finite time. This phenomenon is at the heart of the hardness of combinatorial optimization problems, such as random K-SAT, near computational phase transitions \cite{MooreBook}.

Our main motivation is to explore the critical and low temperature properties of spin-glass systems that encode practical computational problems, which are typically at the intermediate scales with respect to number of variables, range of physical interactions, and spatial dimensions. The spin-glass formulation of such problems often involves thousands or even millions of variables, which precludes any hope of successfully applying brute-force or \textit{ab initio} methods. A given instance of these problems typically contains considerable structure, with an underlying graph that could have a power-law distribution over the degree of connectivity with a fat tail for variables with many long-range physical interactions. Such realistic spin-glasses are also often in an intermediate zone with respect to their fractal dimensions and their physical and computational properties, and may be thought of as lying between the two well-studied limiting cases of short-range Edwards-Anderson model \cite{MezardBook} and infinite range Sherrington-Kirkpatrick (SK) model \cite{sherrington1975solvable}. Consequently spin-glass representations of most interesting and relevant problems reside in an uncharted territory that is analytically and computationally intractable. Although the disorders for each instance can be considered fixed or "quenched" for the relevant time-scales, the self-averaging assumption in statistical physics nonetheless becomes inadequate. Mean-field techniques which can be otherwise successfully applied to toy model problems such as random energy models \cite{MezardBook}, or $p$-spin models \cite{Castellani_2005}, become invalid as the fluctuations over the mean values are typically large. Moreover, Renormalization Group (RG) techniques \cite{NishimoriBook} are ineffective as these approaches rely on strong symmetry assumptions, which are difficult to setup for a particular problem class, not well-defined in presence of strong inhomogeneities, and usually involve crude and irreversible coarse-graining of the microscopic degrees of freedom.

Recent advances in deep learning open up the possibility that these non-linear and non-perturbative emergent properties of spin-glass systems could be machine-learned. Unfortunately, many of the most promising machine learning approaches, such as gradient-based iterative optimization, are only valid for distributions over continuous variables and thus they either cannot be directly applied to discrete spin-glass systems; or they can be applied at the cost of simply ignoring the fact that the machine learning algorithm was developed specifically for distributions over continuous variables. Despite this, there has been some progress in using neural-network for discovering new phases of matter or accelerating Monte Carlo sampling \cite{albergo2019flow, huang2017accelerated, liu2017self, shen2018self, li2018neural}. Here, we are interested in eventually applying recent techniques in deep generative models, such as normalizing flows \cite{rezende2015variational}, to discrete spin-glass distributions described by the following family of Hamiltonians
\begin{equation}
    \label{eq:Hspinglass}
    H =  -\sum_i h_i s_i - \sum_{i<j} J_{ij} s_i s_j \,,
\end{equation}
where $s_i \in \{-1 ,1\}$ is an Ising spin, $J_{ij}$ is the coupling matrix, and $h_i$ is the external magnetic field, and $i$ is an index which runs from 1 to $N$. The equilibrium properties of these systems at an inverse temperature $\beta = 1/T$ are governed by the Boltzmann distribution $p(s) = e^{-\beta H}/Z_s$, where $Z_s$ is the usual discrete partition function involving the sum over all $2^N$ possible spin configurations:
\begin{equation}
    Z_s =\sum_{ \{s_i \}} e^{-\beta H} \,.
\end{equation}

Our goal is to formulate an alternate version of the spin-glass problem in terms of a new Hamiltonian \textit{density}, $\mathcal{H}_{\beta}(x)$ over real variables $x$, such that we can obtain a probability density over $x$ as a Boltzmann distribution, $p(x) = e^{-\beta \mathcal{H}_{\beta}(x)}/Z_x$, where $Z_x$ is a new, partition function given by an integration over all possible continuous configurations:
\begin{equation}
    \label{eq:partitionfunc_x}
    Z_x = \int \mathrm{d}^N x \, e^{-\beta \mathcal{H}_{\beta}(x)} \,.
\end{equation}
As we will show, the continuous Boltzmann distribution $p(x)$ can be constructed by using the original discrete distribution $p(s)$ as a prior, and taking the conditional distribution $p(x|s)$ such that the $x$ variables are normally distributed around each of the $2^N$ possible spin configurations.

In order to apply Hamiltonian Monte Carlo to energy-based models of learning, a continuous relaxation method for transforming discrete Hamiltonians of the form Eq.~\ref{eq:Hspinglass} into continuous variables was introduced in Ref. \cite{zhang2012continuous}. In this approach a continuous distribution is defined which is in a sense ``dual'' to the original, discrete distribution. In this work, we use such continuous dual distribution to develop a Hamiltonian density formulation of spin-glasses to explore their behaviors near and far from critical point in sufficiently high and low temperatures. Our continuous formulation of spin-glasses is particularly attractive since it provides a geometric encoding of many thermodynamic properties and allows gradient-based optimization techniques to be applied.\footnote{Recently, a distinct continuous formulation was recently introduced in \cite{caravelli2019continuum}. This formulation appears to be quite different from ours.} One of the key features of our approach is that it does not involve quenched disorder averaging, self-averaging of disordered $J_{ij}$ for very large $N$, or the mean-field assumption of small fluctuations around the mean value of physical observable at the thermodynamic limit, nor any replicas will be introduced. Our approach is thus not a typical mean-field theory and can be applied to spin-glass systems of intermediate dimensions and arbitrary distributions of interactions and local fields. However, we demonstrate that under certain additional assumptions our probability density formulation reduces to the known techniques such as mean-field theory or Thouless, Anderson, and Palmer (TAP) formalism. In particular, we evaluate physical properties of spin-glass models including  SK model, 2D Ising model, spins on random Erdős-Rényi graphs, and random restricted Boltzmann machines. In a separate manuscript, we apply our formulation to build and train deep generative spin-glass models via normalizing flows over non-local latent continuous variables \cite{hartnett2020selfsupervised}.

The layout of this paper is as follows. In Sec.~\ref{sec:generaltheory} we describe the dual continuous distribution of general spin-glasses expressed by Eq.~\ref{eq:Hspinglass}, and introduce these distributions in terms of Hamiltonian densities. This represents an energy landscape over (continuous) spin-glass configurations, which we explore for general couplings. We then analyze the low-temperature limit in Sec.~\ref{sec:lowT} and show that the description of the landscape simplifies. In Sec.~\ref{sec:SKexample} we consider the SK model \cite{sherrington1975solvable}, random restricted Boltzmann machines, and and spin-glasses on random Erdős-Rényi graphs as three illustrative examples. We conclude with a discussion in Sec.~\ref{sec:conclusion}. In the appendices, we present in-depth technical proofs and derivations of the main results and also provide an additional example on 2D Ferromagnetic Ising model.

\section{The Hamiltonian density \label{sec:generaltheory}}
In this section, we employ continuous relaxation method introduced in \cite{zhang2012continuous} to show that the thermodynamic properties of discrete spin-glass systems of the form Eq.~\ref{eq:Hspinglass} may be equivalently formulated in terms of a probability density over a continuous random variable $x \in \mathbb{R}^N$. This distribution is itself a Boltzmann distribution, $p(x) = e^{-\beta \mathcal{H}_{\beta}(x)}/Z_x$, with continuous partition given by Eq.~\ref{eq:partitionfunc_x} and where $\mathcal{H}_{\beta}(x)$ is the \textit{Hamiltonian density} given by\footnote{We call $\mathcal{H}_{\beta}(x)$ a density in order to emphasize how it transforms under a change of variable. For a change of variable $x'=x'(x)$, the Hamiltonian density transforms as $\mathcal{H}_{\beta}(x') = \mathcal{H}_{\beta}(x) - \ln \det\left(\partial x'/\partial x \right)/\beta$, where $\partial x'/\partial x$ is the Jacobian matrix. The term density is \textit{not} meant to indicate that $\mathcal{H}_{\beta}(x)$ is a per-site quantity.}
\begin{equation}
    \label{eq:H_of_x}
    \mathcal{H}_{\beta}(x) := \frac{1}{2} \sum_{i,j} \tilde{J}_{ij} x_i x_j - \frac{1}{\beta} \sum_{i=1}^N \ln [2\cosh\left(\beta  \tilde{h}_i(x) \right)] \,.
\end{equation}
Here we have introduced a shifted coupling matrix,
\begin{equation}
    \tilde{J}_{ij} := J_{ij} + \Delta \, \delta_{ij} \,,
\end{equation}
where $\Delta$ is a shift parameter chosen to ensure that the probability density is integrable, and $\tilde{h}_i(x)$ is an effective local field given by
\begin{equation}
    \label{eq:heff}
    \tilde{h}_i(x) := \sum_j \tilde{J}_{ij} x_j + h_i \,.
\end{equation}

At large radius $||x||_2 \rightarrow \infty$, the first term in $\mathcal{H}_{\beta}(x)$ dominates and the energy landscape is quadratic in $x$. The probability density will therefore be integrable if $\Delta$ is chosen to ensure that the shifted coupling matrix is positive-definite, which is achieved for any $\Delta > \Delta_{\text{min}}$, with
\begin{equation}
    \label{eq:Delta}
    \Delta_{\text{min}} := \max(0, - \lambda_{1}(J) ) \,,
\end{equation}
where $\lambda_{1}(J) \le ... \le \lambda_{N}(J)$ denote the ordered eigenvalues of the matrix $J$. In other words, if $J_{ij}$ is already positive definite, then $\Delta$ can be set to zero. Otherwise, $\Delta$ must be at least as large as the smallest eigenvalue of $J$. Throughout this work, we will set $\Delta = \max(0, \epsilon - \lambda_{1}(J))$, where $0 < \epsilon \ll 1$ is an arbitrarily small positive number meant to ensure positive definiteness, rather than positive semi-definiteness.

The central starting assumption used in the derivation of Eq.~\ref{eq:partitionfunc_x} is that the distribution of the continuous variable $x$, given a discrete Ising spin configuration $s$, be given by a multi-variate Gaussian centered around $s$ and with covariance matrix proportional to the inverse of the shifted coupling matrix, i.e.
\begin{equation}
    p(x|s) = \mathcal{N}(s, \Sigma) \,,
\end{equation}
with mean and covariance matrix given by
\begin{equation}
\label{eq:GaussianParam}
\mu = s \,, \qquad \Sigma = ( \beta \tilde{J} )^{-1} \,,
\end{equation}
Therefore, the joint distribution $p(x, s) = p(x|s) p(s)$ is itself a multi-variate Gaussian with a prior given by the original discrete distribution. Moreover, the above choice of covariance matrix has the important property that the quadratic term in $s$ vanishes in the exponent of the joint distribution expression:
\begin{equation}
    \label{eq:joint}
    p(x,s) = Z_x^{-1} \exp\left[ -\beta \left( \frac{1}{2} x^T \tilde{J} x - s^T \tilde{J} x - h^T s \right) \right] \,,
\end{equation}
where we have combined the normalization of the Gaussian with the discrete partition function in order to form $Z_x$, and where we have employed matrix notation for convenience. This allows $s$ to be trivially marginalized over, which leads to $p(x) = e^{-\beta \mathcal{H}_{\beta}(x)}/Z_x$, with $\mathcal{H}_{\beta}(x)$ given by Eq.~\ref{eq:H_of_x} above. In general, temperature-dependent interactions can arise when degrees of freedom are integrated out, and this happens here as well - hence the subscript.

Similarly, the conditional distribution $p(s|x)$ factorizes over each site, ${p(s|x) = \prod_{i=1}^N p(s_i | x)}$, with
\begin{equation}
    \label{eq:s_given_x}
    p(s_i|x) = \frac{\exp\left(\beta \tilde{h}_i(x) s_i  \right)}{2\cosh\left( \beta \tilde{h}_i(x) \right)} \,.
\end{equation}
Just as the joint distribution may be interpreted as a mixture of Gaussians by writing $p(x, s) = p(x|s) p(s)$, the above expression allows for an additional interpretation where the joint distribution is given by a product of the sigmoidal-like per-site distributions with a prior given by the continuous probability density, i.e. $p(x,s) = \prod_i p(s_i|x) p(x)$.

Rather than starting with a Gaussian conditional distribution and then calculating the joint distribution, an alternative derivation would be to employ the Hubbard-Stratonovich transformation with $x$ now interpreted as the integration variable. The joint distribution may then be defined as the product of the original distribution and the Gaussian integrand:
\begin{align}
    p(s) \int \mathrm{d}^N x \, \frac{ \exp\left(- \frac{1}{2} \left(x-\mu\right)^T \Sigma^{-1} \left(x-\mu\right) \right)}{\sqrt{(2\pi)^N \det \Sigma}} \triangleq \int \mathrm{d}^N x \, p(x,s) \,.
\end{align}
Typically, the Hubbard-Stratonovich transformation is associated with mean-field and/or the replica approach and is applied only after averaging over the disorder. Here, no disorder average has been performed, no replicas have been introduced, and no approximations have been made.

There is a probabilistic map between the continuous and discrete formulations. Importantly, both conditional distributions $p(x|s)$ and $p(s|x)$ are easy to sample from. Given a collection of discrete spin configurations $s$, perhaps obtained through Monte Carlo techniques, a corresponding collection of continuous spin distributions may be obtained using the conditional distribution $p(x|s)$ (and vice versa). The free energies of each formulation may be related to one another via
\begin{equation}
    \label{eq:Zx}
    \ln Z_x  = \ln Z_s + \frac{N}{2} \ln (2\pi) - \frac{1}{2} \ln \det (\beta \tilde{J}) + \frac{N \beta \Delta}{2} \,.
\end{equation}
This expression may be derived by equating the joint distribution $p(x,s)$ (Eq.~\ref{eq:joint}) with $p(x|s) p(s)$. The second and third terms are simply due to the normalization of the multi-variate Gaussian in $p(x|s)$, and last term is due to the fact that $s^T s = N$ for Ising spins. The fact that the continuous and discrete formulations are related in this way indicates that there is no ``free lunch'' here - one cannot use the continuous formulation to circumvent the problems associated with complex spin-glass distributions - for example, the hardness in sampling or the evaluation of the partition function.

The partition function is a generating function for the $n$-point correlation functions. Denoting the usual thermodynamic ensemble over discrete spins as $\langle \cdot \rangle_s := \sum_{\{ s_i \}} \left( e^{-\beta H} \cdot \right)/Z_s$, and the analogous ensemble over continuous configurations as $\langle \cdot \rangle_x := \int \mathrm{d}^N x \left( e^{-\beta \mathcal{H}_{\beta}(x)} \cdot \right)/Z_x$, then by applying $\partial_{h_{i_1}} ... \partial_{h_{i_p}}$ to each side of Eq.~\ref{eq:Zx}, the connected correlation functions of the two ensembles may be related through:
\begin{equation}
    \label{eq:correlation}
    \langle s_{i_1} ... s_{i_p} \rangle_{s, C} = \Big\langle \tanh ( \beta \tilde{h}_{i_1}(x) ) ... \tanh( \beta \tilde{h}_{i_p}(x)) \Big\rangle_{x, C} \,,
\end{equation}
where all indices are assumed distinct and the $C$ subscript denotes connected. \footnote{Since this relation holds for all $p$, it follows that Eq.~\ref{eq:correlation} also holds for the unconnected correlation functions (i.e. the subscript $C$ may be dropped):
$$ \langle s_{i_n} ... s_{i_p} \rangle_{s} = \Big\langle \tanh ( \beta \tilde{h}_{i_1}(x) ) ... \tanh( \beta \tilde{h}_{i_p}(x)) \Big\rangle_{x} \,. $$} In particular, the average local magnetization at site $i$ is related to the continuous variable via $\langle s_{i} \rangle_{s} = \langle \tanh(\beta \tilde{h}_i(x)) \rangle_{x}$. The marginal probability of the spin pointing up at site $i$ is
\begin{equation}
    \label{eq:marginal}
    p(s_i = \pm 1) = \frac{1}{2} \left(1 \pm \Big\langle \tanh(\beta \tilde{h}_i(x)) \Big\rangle_{x} \right) \,.
\end{equation}
This expression allows for an interpretation of the effective field $\tilde{h}_i(x)$ as a global input signal that determines the local spin polarization after averaging over all possible $x$ configurations. In this expression, the hyperbolic tangent plays the role of an activation function, commonly used in artificial neural networks, that determines the polarization of the spin.

We can express the overlap distribution of the original discrete system in terms of the continuous variable using Eq.~\ref{eq:correlation}. If the thermodynamic (Gibbs) measure decomposes into a sum over pure states, each with weight $w_{\alpha}$, then the disorder-dependent overlap distribution is
\begin{equation}
    P_J(q) = \sum_{\alpha \beta} w_{\alpha} w_{\beta} \delta(q_{\alpha \beta}- q) \,.
\end{equation}
This distribution can be regarded as the order parameter of mean field spin-glasses in our continuous formulation, and the moments of this distribution may be expressed in terms of spin correlation functions as \cite{dotsenko2005introduction}:
\begin{equation}
    q^{(p)}_J := \int \mathrm{d} q \, P_J(q) q^p = \frac{1}{N^p} \sum_{i_1 ... i_p} \langle s_{i_1} ... s_{i_p} \rangle_s^2 \,.
\end{equation}
By using Eq.~\ref{eq:correlation}, this may be equivalently written as:
\begin{equation}
    q_J^{(p)} = \frac{1}{N^p} \sum_{i_1 ... i_p} \Big\langle \tanh( \beta \tilde{h}_{i_1}(x) ) ... \tanh( \beta \tilde{h}_{i_p}(x)) \Big\rangle_{x}^2 \,.
\end{equation}
This relation shows how the spin-glass order parameter is encoded in the continuous formulation.

This concludes the derivation of our continuous formulation. One important aspect of using continuous variables is that they provide a geometric encoding of the problem. In particular, $p(s_i |x)$ encodes the likelihood that a given spin will point up or down for a given point in $\mathbb{R}^N$. This probability is in turn determined by the inverse temperature and the strength of the effective local field at that point, $\tilde{h}_i(x)$. The contours of constant $\tilde{h}_i(x)$ are given by shifted ellipsoids, with the shift given by the external local field $h_i$ and the shape and scale of the ellipsoid determined by the $\beta \tilde{J}$. The conditional distribution $p(s_i | x)$ can be used to obtain the marginal probability distribution $p(s_i)$ by integrating over all $\mathbb{R}^N$ and weighting each point according to it probability under the continuous Boltzmann distribution $p(x)$. The $S$-shaped activation function that appears in Eq.~\ref{eq:marginal} implies that the spins will be frozen if the regions of large local effective field are assigned a low energy in the energy landscape given by $\mathcal{H}_{\beta}(x)$. In subsequent sections, we will explore the geometric structure of this landscape further, both for general coupling matrices $\tilde{J}$, and for some well-known examples such as the SK model.

\section{Geometry of the energy landscape \label{sec:landscape}}
The probability density formulation affords several advantages over the original discrete formulation as well as some additional mathematical subtleties. Continuous variables allow alternative sampling methods such as Hamiltonian Monte Carlo \cite{duane1987hybrid} to be applicable, and indeed, this was one of the motivations for the continuous relaxation method \cite{zhang2012continuous}. Another benefit is that the continuous formulation provides a geometric encoding of of the combinatorial optimization problems which may be represented in terms of spin-glass systems. In this section we will derive basic properties of the geometry of the energy landscape $\mathcal{H}_{\beta}(x)$ for arbitrary values of the couplings and graph topology. Later, in Sections \ref{sec:SKexample}, \ref{sec:RBM}, and \ref{sec:randomgraph} we will further explore our formulation of the the SK model, random restricted Boltzmann machines, and spin on random Erdős-Rényi graphs as specific examples.

One of our main results is that the energy landscape defined by $\mathcal{H}_{\beta}(x)$ is convex above a disorder-dependent critical temperature $T_{\text{convex}}$, given in terms of the largest eigenvalue of the shifted coupling matrix:
\begin{equation}
    \label{eq:convex}
    T_{\text{convex}} := \lambda_N(J) + \Delta \,.
\end{equation}
The proof is given in Appendix \ref{sec:convex}. One of the most important mathematical properties of the energy landscape is whether it is convex or not. In particular, convexity of $\mathcal{H}_{\beta}(x)$ implies that the log probability density $p(x)$ is log-concave, and log-concave probability densities enjoy a number of useful properties, such as the fact that the cumulative distribution function (CDF) is also log-concave, as well as the fact that the marginal density over any subset of the $x_i$ variables will also be concave. Convexity of $\mathcal{H}_{\beta}(x)$ also implies practical consequences, for example it means that certain algorithms such as adaptive rejection sampling may be used to \emph{efficiently} sample $p(x)$ \cite{gilks1992adaptive}.

As the temperature is lowered past $T_{\text{convex}}$, the Hamiltonian density becomes non-convex. In order to understand this transition, it will be useful to set the external magnetic field to zero, $h=0$. We first note that the expression for $\mathcal{H}_{\beta}(x)$ in Eq.~\ref{eq:H_of_x} is the sum of two terms. The first is quadratic in $x$, and is guaranteed to be positive for any $x$ since $\Delta$ was chosen to make $\tilde{J}$ positive-definite. Conversely, the second term is negative for any $x$, and it scales linearly in $x$ at large radii, i.e. as $||x||_2 \rightarrow \infty$. Thus, at large radii the first term dominates and the Hamiltonian density is:
\begin{equation}
    \mathcal{H}_{\beta}(x) \sim \frac{1}{2} x^T \tilde{J} x \,,
\end{equation}
which ensures that $p(x)$ is integrable. In contrast, near the origin $x=0$ the expression simplifies to
\begin{equation}
    \mathcal{H}_{\beta}(x) \sim \text{const} + \frac{1}{2} x^T (\tilde{J} - \beta \tilde{J}^2 ) x 
    \,.
\end{equation}
The linear term in the expansion vanishes, and therefore the origin is a critical point of the Hamiltonian density. If $(\tilde{J} - \beta \tilde{J}^2)$ is also positive-definite, then $x=0$ is a minimum. This condition is equivalent to $T > T_{\text{convex}}$, and so in this case $x=0$ is the unique global minimum. As $T$ is lowered below $T_{\text{convex}}$,  the matrix $(\tilde{J} - \beta \tilde{J}^2)$ develops negative eigenvalues, and $x=0$ becomes a saddle.

In addition to the origin becoming unstable, the convex/non-convex transition is also characterized by the appearance of a pair of additional critical points. The critical points of $\mathcal{H}_{\beta}(x)$ solve
\begin{equation}
    \label{eq:criticalpointeq}
    x = \tanh(\beta \tilde{J} x) \,,
\end{equation}
As the temperature approaches $T_{\text{convex}}$ from below, any critical points that exist will merge with the critical point $x=0$, since $x=0$ is the sole critical point for $T > T_{\text{convex}}$. We may therefore linearize the critical point equation around $x=0$. In this case, Eq.~\ref{eq:criticalpointeq} simplifies to $x = \beta \tilde{J} x$. A non-trivial solution of this equation is just an eigenvector of $\beta \tilde{J}$ with eigenvalue 1, which corresponds to $T = T_{\text{convex}}$.  If $v_i^{(N)}$ is the largest eigenvector of $\beta \tilde{J}$ with corresponding eigenvalue $\lambda_{(N)}$, then so is $c \, v_i^{(N)}$ for any non-zero $c$ - in other words the scale is not fixed in the linear treatment. Going beyond linear order will fix $c$ up to a $\mathbb{Z}_2$ reversal $c \rightarrow - c$, since $h=0$. Thus, a pair of critical points will appear as $T_{\text{convex}}$ is reached from above.

The convex/non-convex transition experienced by the continuous distribution $p(x)$ has no counter-part in the original discrete distribution $p(s)$. For every example we study below, $T_{\text{convex}}$ does \textit{not} correspond to a phase transition in the discrete system. In fact, $T_{\text{convex}}$ may be varied without changing the physical content of the theory by using a shift larger than the minimum, i.e. $\Delta > \Delta_{\text{min}}$. However, there is some physical significance of the minimal value of $T_{\text{convex}}$, which can be seen by noting that $\lambda_N(J)$ is the critical temperature predicted by the naive mean-field equation
\begin{equation}
    \label{eq:meanfield}
    x = \tanh\left(\beta J x\right) \,.
\end{equation}
Defining $T_{\text{mean-field}} := \lambda_N(J)$, we may then write
\begin{equation}
    \label{eq:temperature_relation}
    T_{\text{convex}} = T_{\text{mean-field}} + \Delta \,.
\end{equation}
Moreover, if the eigenvalues of $J$ lie in a symmetric interval, with $\lambda_N(J) = - \lambda_1(J)$, then $\Delta_{\text{min}} = \lambda_N(J)$, and $T_{\text{convex}} \ge T_{\text{mean-field}} + \Delta_{\text{min}} = 2 \, T_{\text{mean-field}}$. With our choice of $\Delta = \max(0, \epsilon - \lambda_1(J))$ we have that $T_{\text{convex}} = 2 \, T_{\text{mean-field}} + \epsilon$ so that as $\epsilon \rightarrow 0$ the inequality is saturated.

To summarize the results of this section: as the temperature is lowered past a $T_{\text{convex}}$, the Hamiltonian density becomes non-convex, the critical point at $x=0$ becomes unstable, and a pair of non-trivial critical points with $x \neq 0$ appears. It is difficult to go much further than this description and make more detailed statements about the geometry of the energy landscape without specifying the couplings $J$. This is to be expected, since our formalism applies to all spin-systems of the form Eq.~\ref{eq:Hspinglass}, which includes both spin-glasses and ferromagnetic systems like the 2d Ising model. Below in Sections \ref{sec:SKexample}, \ref{sec:RBM} and \ref{sec:randomgraph} we will further analyze the landscape for the Sherrington-Kirkpatrick model, random Restricted Boltzmann Machines, and spin-glasses on random Erdős-Rényi graphs respectively. The case of 2D ferromagnetic Ising model system is explored in Appendix \ref{sec:examples}.

\section{Geometry of the energy landscape deep in the spin-glass phase \label{sec:lowT}}
In this section we will discuss the low-temperature limit of our formalism. Our goal will be to provide some insight into the geometry of energy landscapes of systems which are deep in the spin-glass phase (when such a phase exists), and to show how the metastable spin-glass states are geometrically encoded in the Hamiltonian density, $\mathcal{H}_{\beta}(x)$. We will leave the coupling matrices and disorder distribution unspecified, and as a result, our discussion will be somewhat general.

We begin by taking the low-temperature expansion of the Hamiltonian density: ${\mathcal{H}_{\beta}(x) = \mathcal{H}_{\infty}(x) + \mathcal{O}(\beta^{-1})}$, where
\begin{equation}
    \mathcal{H}_{\infty}(x) := \frac{1}{2} x^T \tilde{J} x - \sum_{i=1}^N |\tilde{J} x|_i \,.
\end{equation}
The equation governing the zero critical points may be obtained from $\mathcal{H}_{\infty}(x)$ directly or from the $\beta \rightarrow \infty$ limit of Eq.~\ref{eq:criticalpointeq}:
\begin{equation}
    x = \text{sgn}( \tilde{J} x )\,.
\end{equation}
Additionally, the Hessian (matrix of second derivatives) of $\mathcal{H}_{\infty}(x)$ is simply the shifted coupling matrix: $K_{\infty} := \tilde{J}$. There is a subtlety here, which is that the Hessian is not defined for points which satisfy $(\tilde{J} x)_i = 0$ for any $i$ because the absolute value function is not differentiable at the origin. This is an important observation, since without it one would conclude that $\mathcal{H}_{\infty}(x)$ is convex, which it certainly is not.

With these ingredients, the integral defining the partition function $Z_x$ may then be formally written as a sum over the critical points using Laplace's method:
\begin{align}
    \label{eq:saddlept2}
    Z_x = \int \mathrm{d}^N x \, e^{-\beta \mathcal{H}_{\beta}(x)}
    &\approx \sum_{\alpha} e^{-\beta \mathcal{H}_{\infty}(x^{(\alpha)})} \int \mathrm{d}^N x \, e^{- \frac{\beta}{2} (x - x^{(\alpha)})^T \tilde{J} (x - x^{(\alpha)})} \nonumber \\
    &= \sqrt{\frac{(2\pi)^N}{\det(\beta \tilde{J})}} \sum_{\alpha} e^{-\beta \mathcal{H}_{\infty}(x^{(\alpha)})} \,,
\end{align}
where $x^{(\alpha)}$ are the critical points of the Hamiltonian density, and the prefactor is due to the Gaussian integration around each critical point. In writing the above expression we have assumed that all critical points are minima, and so the Gaussian integration converges. Without specifying the coupling matrix it is difficult to say much about the existence or non-existence of saddles, beyond the fact that $x=0$ is always both a solution of the critical point equation and a point for which the Hessian is not defined. This and any other similar points will require some special treatment, for example by rotating the integration contours and including sub-leading corrections in $\beta^{-1}$. Ignoring such complications, general correlation functions may also be formally written as a sum over critical points as:
\begin{equation}
    \left\langle f(x) \right\rangle_{x} \approx \sum_{\alpha} \omega_{\alpha} f(x^{(\alpha)}) \,,
\end{equation}
where $\omega_{\alpha} := e^{-\beta \mathcal{H}_{\beta}(x^{(\alpha)})}/Z_x$ is the Boltzmann weight of each critical point, and $f$ is an arbitrary function.\footnote{We have ignored the Gaussian prefactor here, since 1) it is subleading in $\beta^{-1}$, and 2) all critical points receive the same prefactor since the Hessian is just the constant matrix $\tilde{J}$.} Thus, the critical points can be seen to encode almost all of the physics of the problem in the low-temperature limit. Applying the saddle-point method to the 1-point function (i.e. the $p=1$ case of Eq.~\ref{eq:correlation}), the saddle point coordinates are related to the average per-site magnetizations via
\begin{equation}
    \label{eq:onepoint_saddle}
    \langle s_i \rangle_{s} \approx \sum_{\alpha} \omega_{\alpha} x_i^{(\alpha)}  \,.
\end{equation}
Therefore, in our continuous formulation the critical points are very analogous to the pure states of spin-glass theory. Pure states of spin-glasses are sub-regions in the state space which are separated by large energy barriers, and the system is sub-ergodic in those regions even though global ergodicity is broken \cite{mezard1987spin}. Indeed, if the sum over critical points is restricted to just a single critical point (or if there is only one such dominant critical point in the thermodynamic limit), then all connected correlation functions vanish, for example
\begin{equation}
    \langle x_{i_1} x_{i_2} \rangle_x = \langle x_{i_1} \rangle_x \langle x_{i_2} \rangle_x \,.
\end{equation}
This property is also known as cluster decomposition.

The pure states have a simple geometric interpretation in our formalism. Recall that the continuous probability density may be written as a weighted sum of Gaussians, each centered around one of the $2^N$ spin configurations, $p(x) = \sum_{\{ s \}} p(x|s) p(s)$. The covariance matrix of each Gaussian is $\Sigma = (\beta \tilde{J})^{-1}$, and so the level sets of $p(x|s)$ are $N$-dimensional ellipsoids whose shape is determined by the eigenvectors and eigenvalues of $\Sigma$. In general, the density clouds due to each spin configuration will overlap - for example above $T_{\text{convex}}$ the Gaussians are so broad that the resulting continuous distribution $p(x)$ is log-concave. At low-temperatures, the distribution will ``fragment'' into a number of distinct modes. An example of this is shown in Fig.~\ref{fig:lowTcontourplot} for the simple case of just two spins, $N=2$.

\begin{figure}
\centering
  \includegraphics[width=1\linewidth]{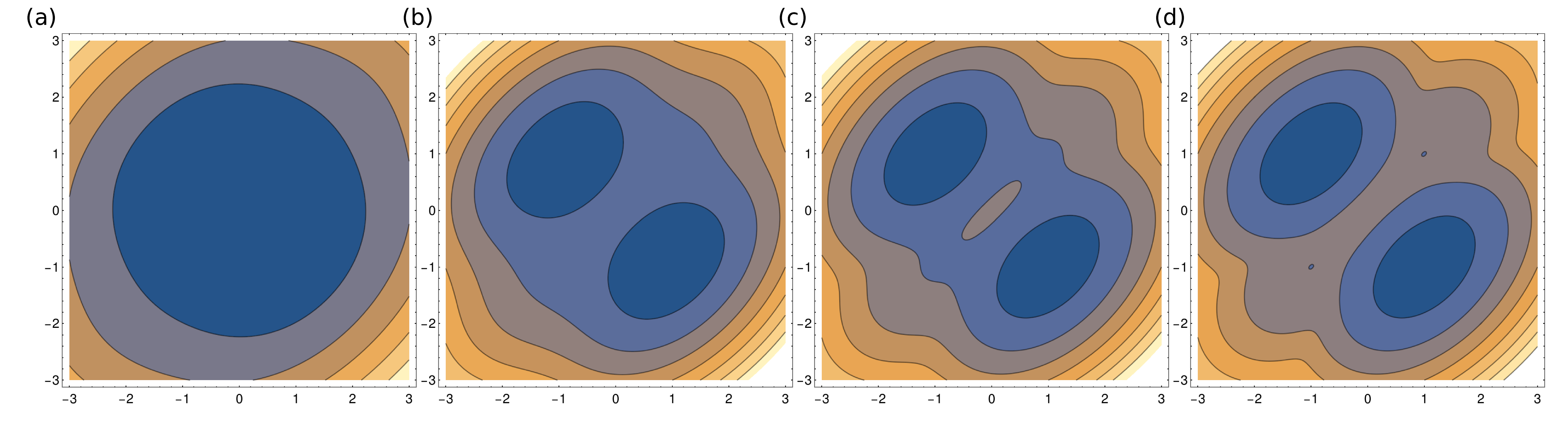}
  \captionof{figure}{
  Contour plot of the Hamiltonian density for a system of two spins with $J_{12} = J_{21} = 0.79257$ with shift $\Delta = \max(0, \epsilon - \lambda_1(J))$ and $\epsilon = 0.1$ for
  (a) $T = T_{\text{convex}}$
  (b) $T = T_{\text{convex}}/2$
  (c) $T =  T_{\text{convex}}/4$
  (d) $T =  T_{\text{convex}}/8$
  . The density clouds due to spin configurations overlap above $T_{\text{convex}}$; that is the Gaussians are sufficiently broad that the resulting continuous distribution $p(x)$ is log-concave. At low-temperatures, the distributions become fragmented into several distinct modes. Blue regions correspond to low energy configurations.
  \label{fig:lowTcontourplot}}
\end{figure}

The nature of this fragmentation depend on how the $\beta \rightarrow \infty$ limit is taken. Suppose that the original coupling matrix $J$ has both positive and negative eigenvalues, so that the eigenvalues of the shifted matrix $\tilde{J}$ satisfy $\lambda_i(\tilde{J}) \ge \epsilon$ (recall that the purpose of introducing the small positive constant $\epsilon$ was to guarantee positive-definiteness of $\tilde{J}$). If $\epsilon$ is chosen to be temperature-independent, then the $\beta \rightarrow \infty$ limit pushes all the eigenvalues of $(\beta \tilde{J})$ to infinity, and consequently all the eigenvalues of ${ \Sigma = (\beta \tilde{J})^{-1} }$ approach zero. In this case $p(x)$ is composed of $2^N$ distinct delta functions with different weights, and the pure states are rather trivially just the $2^N$ spin configurations. However, if instead $\beta \epsilon$ is held fixed as $\beta \rightarrow \infty$, then the eigenvalue spectrum of $\Sigma$ will range from 0 to the finite value $1/(\beta \epsilon)$. Thus, the shape of the ellipsoid defining the level sets of the Gaussians will shrink to a point in some directions, and remain finite in others. In this case the fragmentation of $p(x)$ will be more interesting. Groups of spin configurations will merge to form pure states as determined by the geometry of the zero-temperature ellipsoids in relation to the $2^N$ vertices of the $[-1,1]^N$ hypercube.

The pure states of non-disorder averaged spin-glasses can be associated with solutions of a modified mean-field equation known as the Thouless, Anderson, and Palmer (TAP) equation, which was derived in order to correct the failure of naive mean-field theory to describe the spin-glass phase of the SK model. The naive mean-field equation is given in Eq.~\ref{eq:meanfield}, whereas the TAP equation is given by \cite{thouless1977solution}:
\begin{equation}
    \label{eq:TAP}
    x_i = \tanh\Big( \beta \sum_j J_{ij} x_j - \beta^2 x_i \sum_j J_{ij}^2 (1-x_j^2) \Big) \,.
\end{equation}
We have argued that the critical points of the continuous probability density may be interpreted as pure states at low temperature, and thus there should be a connection between these and the solutions of the TAP equation. Here we will establish such a connection at zero temperature, for which the TAP equation simplifies considerably: $x = \text{sgn}\left( J x \right)$. Importantly, this is also the zero-temperature limit of the naive mean-field equation Eq.~\ref{eq:meanfield}. The zero-temperature limit of the TAP/naive mean-field equations may be compared with the equation governing the critical points of $\mathcal{H}_{\infty}(x)$:
\begin{subequations}
\begin{equation}
    x = \text{sgn}(\tilde{J} x)\,, \qquad \text{critical points of } \mathcal{H}_{\infty}(x) \,.
\end{equation}
\begin{equation}
    x = \text{sgn}\left( J x \right) \,, \qquad \text{naive mean-field/TAP} \,.
\end{equation}
\end{subequations}
Note that the TAP/naive mean-field equation depends on the original coupling matrix $J$, whereas the critical points of the Hamiltonian density depend on the shifted coupling matrix $\tilde{J} = J + \Delta \, \mathbbold{1}_{N \times N}$. A key result is that solutions of the zero-temperature naive mean-field equation/TAP equation are also critical points of the zero temperature Hamiltonian density:
\\
\begin{proposition}
\label{prop:subset}
If $x$ is a solution of the zero temperature TAP equation $x = \text{sgn}(J x)$, then $x$ is also a solution of the zero temperature critical point equation $x = \text{sgn}( \tilde{J} x)$.
\end{proposition}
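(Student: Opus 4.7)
The plan is to exploit the simple relationship $\tilde{J} = J + \Delta\,\mathbbold{1}_{N\times N}$ with $\Delta \geq 0$, and to show that adding a non-negative multiple of $x$ to $Jx$ cannot flip any component's sign when $x$ is already a $\pm 1$ vector aligned with $Jx$. The argument is really a one-line computation, so rather than any heavy machinery the task is mostly to book-keep signs and handle one degenerate case cleanly.

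First I would observe that any $x$ satisfying $x = \text{sgn}(Jx)$ must have components in $\{-1,+1\}$ (implicitly assuming $(Jx)_i \neq 0$ for all $i$, which is the generic and only consistent case). In particular $x_i^2 = 1$ and $x_i(Jx)_i = |(Jx)_i| \geq 0$ for each $i$. Next I would compute componentwise
\begin{equation*}
    x_i(\tilde{J}x)_i \;=\; x_i(Jx)_i + \Delta\, x_i^2 \;=\; |(Jx)_i| + \Delta \;\geq\; 0,
\end{equation*}
using $\tilde{J} = J + \Delta\,\mathbbold{1}_{N\times N}$ from the definition in Sec.~\ref{sec:generaltheory}. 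Since $x_i \in \{-1,+1\}$ and $x_i(\tilde{J}x)_i \geq 0$, the sign of $(\tilde{J}x)_i$ agrees with that of $x_i$, whence $\text{sgn}((\tilde{J}x)_i) = x_i$, which is exactly the zero-temperature critical point equation.

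The only subtlety, and the one place where care is needed, is the boundary case in which $|(Jx)_i| + \Delta = 0$. This forces $(Jx)_i = 0$ and $\Delta = 0$ simultaneously. But $\Delta = 0$ occurs only when $J$ is already positive definite (so $\lambda_1(J) \geq \epsilon$), in which case $\tilde{J} = J$ and the two equations in the proposition are literally the same; and $(Jx)_i = 0$ would contradict the assumption that $x_i = \text{sgn}((Jx)_i) \in \{-1,+1\}$ under any sensible convention for $\text{sgn}(0)$. Hence the degenerate case is either vacuous or trivially true, and the proof is complete. I do not expect any nontrivial obstacle here; the main thing is simply to state the degenerate-case carve-out explicitly so the proof is airtight.
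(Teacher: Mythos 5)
Your proof is correct and takes essentially the same route as the paper's: both arguments reduce to the observation that $(Jx)_i$ and $\Delta x_i$ have the same sign componentwise, so their sum cannot flip sign --- the paper factors out $\text{sgn}(Jx)$ from $Jx + \Delta x$ while you multiply through by $x_i$, which is the same bookkeeping. Your explicit carve-out of the degenerate case ($\Delta = 0$ and $(Jx)_i = 0$) plays the same role as the paper's case split between $\Delta = 0$ and $\Delta > 0$.
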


\begin{proof}
Suppose $x = \text{sgn}(J x)$. The result holds for any $\Delta \ge \Delta_{\text{min}} \ge 0$, so we will consider the cases $\Delta = 0$ and $\Delta > 0$ separately. If $\Delta =0$, then clearly the mean-field and critical point equations are identical. If $\Delta > 0$, then $\text{sgn}(\Delta x) = \text{sgn}(x) = \text{sgn}( \text{sgn}(J x)) = \text{sgn}(J x)$. Thus,
\begin{align}
    \text{sgn}( \tilde{J} x ) &= \text{sgn}( J x + \Delta x) = \text{sgn}( \text{sgn}(J x) \, |J x| + \text{sgn}(\Delta x) \, |\Delta x| ) \\
    &= \text{sgn}\left( \text{sgn}(Jx) \left( |J x| + | \Delta x| \right) \right) = \text{sgn}(J x) \nonumber \\
    & = x\nonumber \,.
\end{align}
\end{proof}
This establishes that for $T=0$ every solution of the TAP equation is also a critical point of $H_{\infty}(x)$. The converse does not hold: there are critical points of the Hamiltonian density which are not solutions of the TAP equation. To understand the significance of these points, recall that the nature of the pure states depends on whether $\epsilon$ is held fixed as $\beta \rightarrow \infty$, or if instead $\beta \epsilon$ is held fixed. In the first case, the pure states of the continuous formulation are somewhat trivial, as any of the $2^N$ spin configurations will be a pure state according to the above discussion. There may additionally be critical points with $x_i = (\tilde{J} x)_i = 0$ for some $i$ which will not correspond to any Ising spin configuration. For example, the point $x=0$ is always a critical point. Since the TAP solutions are a subset of all possible spin configurations, the zero-temperature critical points will include both the TAP solutions as well as all other spin configurations and any saddle-like points such as $x=0$. If the zero-temperature limit is instead taken while holding $\beta \epsilon$ fixed, then the critical points will include just a subset of all $2^N$ spin configurations. That subset will include the TAP states, and possibly other spin configurations and saddle-like points.

\section{Sherrington-Kirkpatrick Model \label{sec:SKexample}}
In order to build intuition for the probability density formulation of general spin-glass systems, in this section we consider as an example the Sherrington-Kirkpatrick (SK) model \cite{sherrington1975solvable}. By specifying the coupling matrix $J$ (or rather, the disorder distribution from which $J$ is drawn), we may further explore the geometry of the energy landscape and the nature of both the spin-glass and convexity transitions in our formulation.

The Sherrington-Kirkpatrick (SK) model is defined by specifying that the couplings $J_{ij}$ be drawn from an iid Gaussian distribution \cite{sherrington1975solvable}:
\begin{equation}
J_{ij} \sim \mathcal{N}\left( 0, \frac{\mathcal{J}^2}{N} \right) \,, \qquad (i < j) \,,
\end{equation}
where the $i > j$ values are fixed by symmetry of $J$ to be the same as the $i < j$ values, and the diagonal entries are zero. The coupling parameter $\mathcal{J}$ controls the variance of the disorder. The eigenvalue distribution of $J$ in the large-$N$ limit is simply the Wigner semi-circle distribution, so that the probability density of the eigenvalues of $J$ is
\begin{equation}
p_{J}(\lambda) = \frac{2}{\pi R^2} \sqrt{R^2 - \lambda^2} \, 1_{[-R,R]}(\lambda) \,,
\end{equation}
where $1_{[-R,R]}(\lambda)$ is the indicator function. The radius of the semi-circle is related to the coupling parameter via $R = 2  \mathcal{J}$. Since the eigenvalues of $J$ are restricted to the strip $[-R,R]$, the eigenvalues of the shifted coupling matrix $\tilde{J}$ will be shifted to lie within the strip $[\epsilon, 2R + \epsilon]$. The eigenvalues of both $J$ and $J_{\Delta}$ are depicted in Fig.~\ref{fig:SK_eigs}
\begin{figure}
\centering
  \includegraphics[width=0.6\linewidth]{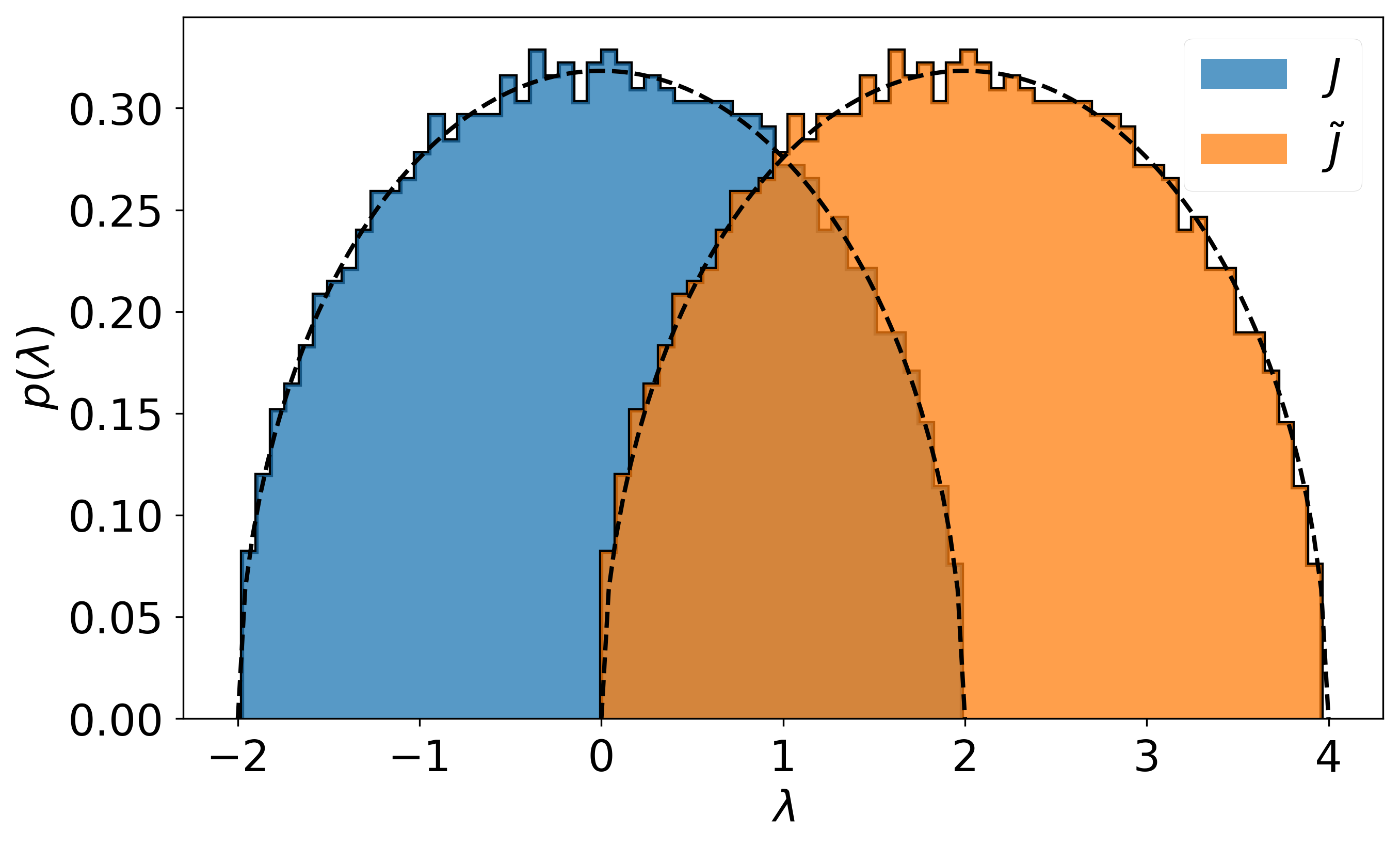}
  \captionof{figure}{
  The eigenvalue distribution of the coupling matrix $J$ and the shifted coupling matrix $\tilde{J}$ for the SK model. Both distributions are described by the Wigner semi-circle distribution, shown in black for both $J$ and $\tilde{J}$. The size of the shift has been chosen so that the shifted distribution has support on the positive real numbers, $\lambda \in (0, \infty)$.
  \label{fig:SK_eigs}}
\end{figure}

Using the radius of the Wigner semi-circle (and disregarding $\epsilon$ for now by setting it to zero), we have
\begin{equation}
T_{\text{mean-field}} = 2 \mathcal{J} \,, \qquad T_{\text{convex}} = 2 \, T_{\text{mean-field}} \,.
\end{equation}
These may be contrasted with the critical temperature below which the system is in a spin-glass phase:
\begin{equation}
    T_{\text{crit}} = \mathcal{J} \,.
\end{equation}
Therefore, we have found that \begin{equation}
    T_{\text{crit}} < T_{\text{mean-field}} < T_{\text{convex}} \,.
\end{equation}
This indicates that the Hamiltonian density becomes non-convex due to the appearance of multiple critical points well before any transition to an ordered phase occurs.\footnote{It is worth noting that these results are strictly only valid for $N \rightarrow \infty$. For finite-$N$ both the eigenvalues of $J$ and the critical temperature will exhibit fluctuations due to finite-sized effects. The fluctuation of the critical temperature due to finite-size effects is investigated in \cite{castellana2011role}.}

\subsection{High-temperature limit}
The fact that $T_{\text{convex}} \neq T_{\text{crit}}$ is intriguing. Naively one might have thought that the two temperatures would have coincided because the transition from a convex to non-convex Hamiltonian density represents a real and significant change in the corresponding Boltzmann distribution. Moreover, the minimal value of $T_{\text{convex}} = 4 \mathcal{J}$ does not appear to have been previously identified as having any particular importance for the well-studied SK model. The mathematical transformation from the original discrete variables to the continuous variables was exact and involved no approximation; however, one still needs to verify that spin-glass transition has not been shifted and still occurs at $T_{\text{crit}}$ not at $T_{\text{convex}}$ when the convexity is no longer guaranteed. To this end, we carried out a high-temperature expansion in terms of the continuous variables and find exact agreement with the expansion in terms of the original discrete variables carried out by Thouless, Anderson, and Palmer in \cite{thouless1977solution}. In both cases, the expansion breaks down at the spin-glass phase transition temperature $T=T_{\text{crit}}$ and not at the higher temperature $T_{\text{convex}}$. We will provide an outline of the calculation here, and a more detailed treatment can be found in Appendix \ref{sec:highTSK}.

Using Eq.~\ref{eq:Zx}, the partition function $Z_s$ may be written in terms of the continuous variables as
\begin{equation}
    Z_s = e^{-\frac{N \beta \Delta}{2}} \left\langle \prod_i \cosh\left( \beta^{1/2} (\tilde{J} x)_i \right) \right\rangle_0 \,.
\end{equation}
The expectation value is taken over a properly normalized Gaussian distribution with zero-mean and covariance matrix $\tilde{J}^{-1}$. A high-temperature expansion may then be performed by expanding around $\beta = 0$. At each order the Gaussian integrals may be performed by Wick contractions, which introduces an increasing number of terms as the order of the expansion increases. The calculation simplifies dramatically if the disorder is averaged over. Denoting the disorder average as $\langle \cdot \rangle_J$, the final result is
\begin{equation}
    \langle \ln Z_s \rangle_J = N\left( \frac{(\beta \mathcal{J})^2}{4} + \ln 2 \right) + \frac{1}{4} \ln\left(1-\beta^2 \mathcal{J}^2 \right) + \text{(non-singular)} + \mathcal{O}(N^{-1}) \,.
\end{equation}
For $T > T_{\text{crit}}$ the sub-extensive terms may be neglected, but as the temperature of the spin-glass transition is approached from above the logarithm becomes singular, indicating a breakdown of the perturbative expansion. Not only is the free energy analytic at the minimal convexity transition temperature $\min_{\Delta} T_{\text{convex}} = 4 \mathcal{J}$, but any dependence on the shift $\Delta$ cancels out, since $\Delta$ was only introduced as part of our formulation.

The above result was first derived in \cite{thouless1977solution} by considering the expansion of $Z_s$ in terms of the original discrete spin variables. In both cases - the expansion in terms of $s$ and the expansion in terms of $x$, the singular logarithm term is obtained by summing an infinite number of terms. In terms of Feynman diagrams, the terms that contribute to the singularity correspond to double-sided regular $n$-gons for $n \ge 3$:
\begin{equation*}
    \Ngondouble{3} + \Ngondouble{4} + \Ngondouble{5} + \Ngondouble{6} + \Ngondouble{7} + \cdots
\end{equation*}
The fact that both expansions agree and yield no non-analyticity at $T_{\text{convex}}$ indicates that the convex/non-convex transition is not associated with a thermodynamic phase transition. It also provides a consistency check that the continuous formulation does not break down below $T_{\text{convex}}$.

\subsection{Zero-temperature limit}
Lastly, we investigated the zero-temperature limit of the SK model by studying the critical points. These are solutions of the equation $x = \text{sgn}(\tilde{J} x)$. We generated a large number of such solutions by randomly initializing $x^{(0)} \in \{-1, 1\}^N$ and then applying the iterative update rule below until a solution was found (or the algorithm failed to converge after a set number of iterations):
\begin{equation}
    x^{(t)} = \frac{1}{2} \left( x^{(t-1)} +  \text{sgn}(\tilde{J} x^{(t-1)}) \right) \,, \quad \text{(critical point)} \,.
\end{equation}
This update rule corresponds to performing gradient descent on $\mathcal{H}_{\beta}(x)$, using a learning rate of $1/2$ and $\tilde{J}^{-1} \nabla \mathcal{H}_{\beta}(x)$ in place of the usual gradient $ \nabla \mathcal{H}_{\beta}(x)$.\footnote{We thank Dan Ish for pointing this out to us.} We also generated a large number of solutions of the zero-temperature mean-field/TAP equation $x = \text{sgn}(J x)$ using the same procedure with update rule given by:
\begin{equation}
    x^{(t)} = \frac{1}{2} \left( x^{(t-1)} + \text{sgn}(J x^{(t-1)}) \right) \,, \quad \text{(mean-field/TAP)} \,.
\end{equation}

In agreement with Proposition \ref{prop:subset} above, we found that every mean-field/TAP solution also solved the critical point equation. Interestingly, we also found that none of the critical point solutions generated this way also solved the mean-field/TAP equation. This is consistent with our earlier observation (that held for large $\Delta)$ that the saddle-like critical points exponentially out-numbered the minima. We also found that the solutions produced by the iterative method applied to each equation had widely separated energies. Fig.~\ref{fig:SK_zeroT_energies} plots the distribution of energies of each set of solutions.\footnote{It should be emphasized that the iterative update rule/gradient descent method we used almost certainly does not generate solutions uniformly. We generated 10,000 unique solutions using each approach, but for the SK model we expect an exponential (in $N$) number of solutions \cite{bray1980metastable} and the solutions plotted in Fig.~\ref{fig:SK_zeroT_energies} may not be representative of the overall distribution. Rather, they are representative of the distribution obtained when solutions are generated using the iterative update rule/gradient descent.}
\begin{figure}
\centering
  \includegraphics[width=0.6\linewidth]{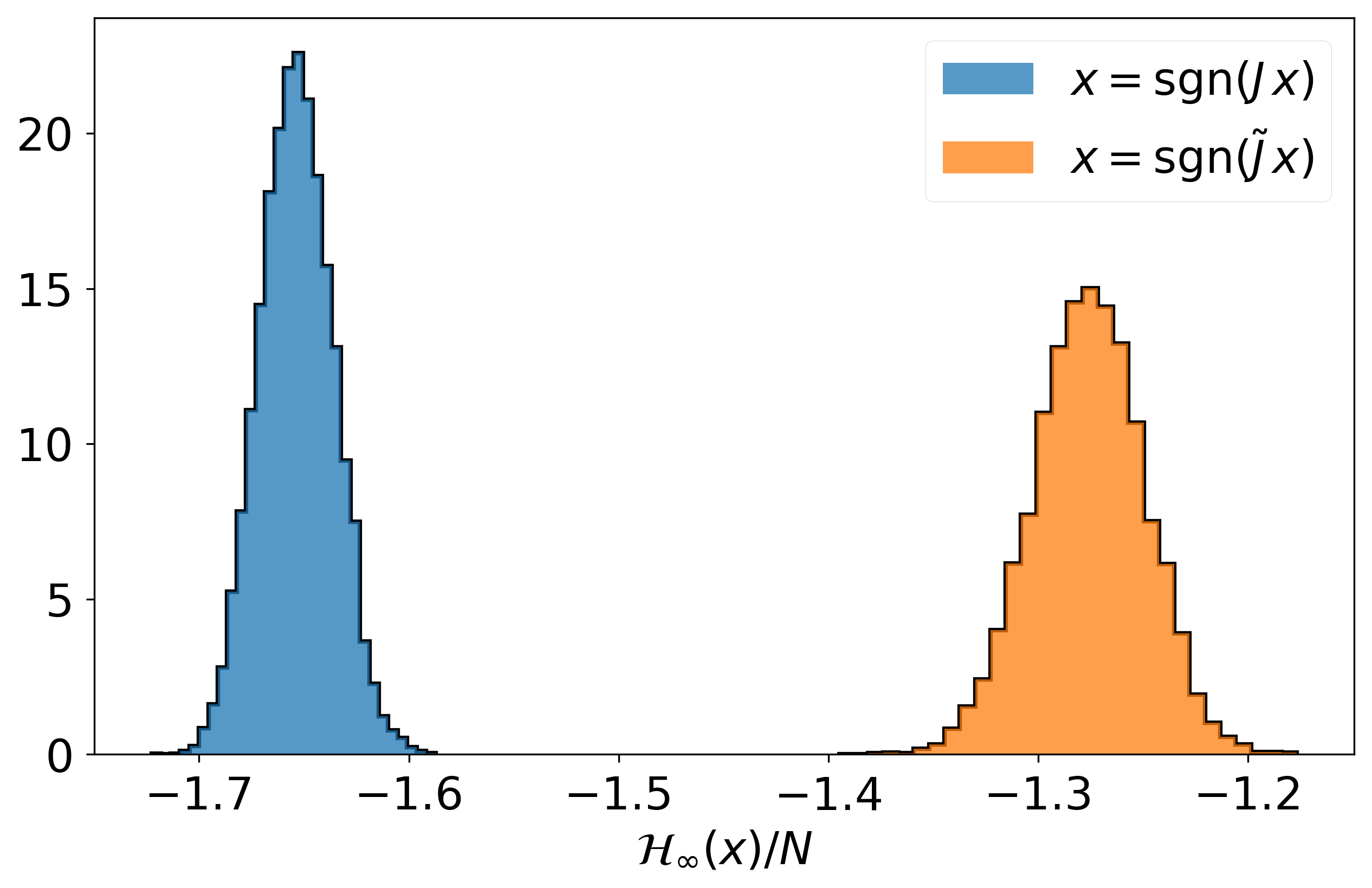}
  \captionof{figure}{
  The distribution of the per-site Hamiltonian densities at zero temperature, i.e. $\mathcal{H}_{\infty}(x)/N$ obtained by an application of the iterative procedure discussed in the text. This procedure was applied to two equations, the mean-field/TAP equation $x = \text{sgn}(J x)$ and the critical point equation $x = \text{sgn}(\tilde{J} x)$, and in both cases $N=500$. Note that every solution of the first equation is also a solution of the second, although the converse is not true. Here we have set $\epsilon = 0$ in $\Delta_{\text{min}}$. This plot shows that the typical critical point has a much higher energy than typical solutions of the mean-field equation (when both sets of solutions are obtained using the iterative procedure).
  \label{fig:SK_zeroT_energies}}
\end{figure}

\section{Random Restricted Boltzmann Machines \label{sec:RBM}}
As a second example, we study the bipartite SK model, which is the natural extension of the SK model to bipartite complete graphs. This example also has significance in machine learning as it represents a randomly initialized Restricted Boltzmann Machine (RBM) \cite{smolensky1986information}. In particular, the bipartite SK model describes a random initialization of RBMs where the biases have been set to zero. The connection between the bipartite SK model and RBMs has been recently studied in \cite{decelle2017spectral, decelle2018thermodynamics, hartnett2018replica}.

In this case, the coupling matrix $J_{ij}$ takes on the block form:
\begin{equation}
J =
\begin{pmatrix}
0 & W \\
W^T & 0
\end{pmatrix} \,,
\end{equation}
with $W$ a $N_v \times N_h$ matrix, where $N_v$ is the number of visible spins and $N_h$ is the number of hidden spins. The total number of spins is $N=N_v+N_h$, and the spin vector may be written as $s^T = (v, h)$. As in the SK model, the weights $W_{ij}$ in the bipartite SK model will  be iid normally distributed:
\begin{equation}
    W_{ij} \sim \mathcal{N}\left(0, \frac{\mathcal{J}^2}{\sqrt{N_v N_h}}\right) \,.
\end{equation}

Using the relation
\begin{equation}
    \det \begin{pmatrix}
A & B \\
C & D
\end{pmatrix} = \det\left(A - B D^{-1} C \right) \det(D)
\end{equation}
for block matrices $A, B, C, D$, the characteristic equation for $J$, $\det\left( J - \lambda \mathbbold{1}_{N \times N} \right) = 0$, is equivalent to the condition
\begin{equation}
    \det \left( W W^T - \lambda^2 \right) = 0\,,
\end{equation}
provided that $\lambda \neq 0$. Thus, the non-zero eigenvalues of $J$ are related to the eigenvalues of $W W^T$ via
\begin{equation}
    \label{eq:JtoWWT}
    \lambda_i \left( J \right) = \pm \lambda_i \left( W W^T \right)^{1/2} \,.
\end{equation}

The eigenvalue distribution of $W W^T$ in the large-$N$ limit defined by $N \rightarrow \infty$ with $\kappa = N_v/N_h$ held fixed is given by the Marchenko–Pastur distribution \cite{marvcenko1967distribution}, which in our conventions is:
\begin{equation}
    p_{W W^T}(\lambda) =  \frac{\sqrt{(R_+ - \lambda)(\lambda - R_-)}}{2\pi \mathcal{J}^2 \kappa^{1/2} \lambda} 1_{[R_-,R_+]}(\lambda) + \max\left(0, 1 - \kappa^{-1} \right) \delta(\lambda) \,,
\end{equation}
where
\begin{equation}
    R_{\pm} = \mathcal{J}^2 \left(\kappa^{-1/4} \pm \kappa^{1/4} \right)^2 \,.
\end{equation}
In Fig.~\ref{fig:Marchenko_Pastur} we plot the eigenvalue distribution for both $W W^T$ and $J$.

As a result of this analysis, we conclude that in the large-$N$ limit $\lambda_i(J) \in [-\sqrt{R_+} , \sqrt{R_+}]$, and also $\lambda_i(\tilde{J}) \in [0, 2\sqrt{R_+}]$ (again neglecting $\epsilon$). Thus, we have that
\begin{equation}
    T_{\text{mean-field}} = \mathcal{J} \left( \kappa^{-1/4} + \kappa^{1/4} \right) \,, \quad \text{and} \quad  T_{\text{convex}} = 2 \, T_{\text{mean-field}} \,.
\end{equation}
Moreover, as in the SK model, both of these temperatures are higher than the critical temperature of the spin-glass phase transition, which in our conventions is \cite{hartnett2018replica}:
\begin{equation}
    T_{\text{crit}} = \mathcal{J} \,.
\end{equation}
Similar to the case of the SK model, the convex/non-convex transition happens well before the phase transition occurs as the temperature is lowered.

\begin{figure}
\centering
  \centering
  \includegraphics[width=0.95\linewidth]{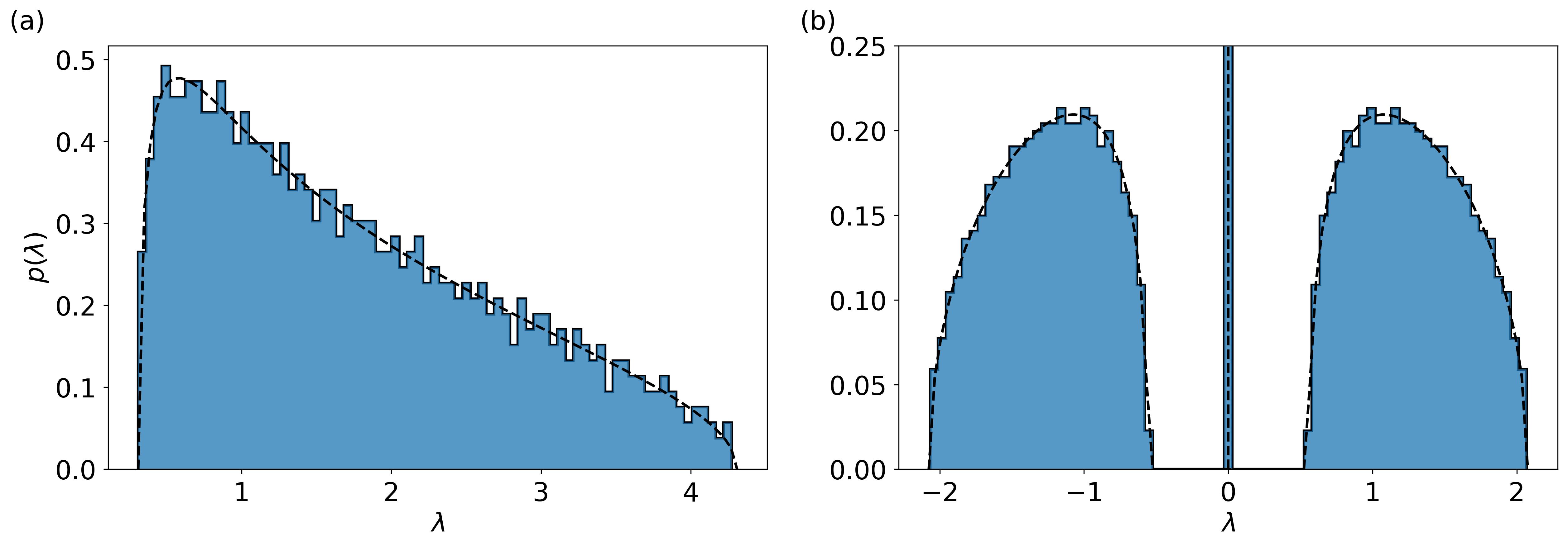}
  \captionof{figure}{
  (a) The eigenvalue distribution for a random draw of $W W^T$ for $N_v = 1000$, $N_h = 3000$ and $\beta = \mathcal{J} = 1$. The dashed line corresponds to the large-$N$ analytic prediction given by the Marchenko-Pastur distribution.
  (b) The eigenvalue distribution for the coupling matrix $J$ constructed using the $W$ matrix used in (a). The dashed line corresponds to the large-$N$ analytic prediction, which may be obtained from the Marchenko-Pastur analytic prediction and Eq.~\ref{eq:JtoWWT}.}
  \label{fig:Marchenko_Pastur}
\end{figure}

\section{Spin-glasses on Erdős-R\'{e}nyi random graphs} \label{sec:randomgraph}
As a final example, we examine another prototypical spin-glass model by placing spins on Erdős-R\'{e}nyi random graphs. We will consider $J_{ij}$ to be a Bernoulli random variable, by which we mean that
\begin{equation}
J_{i<j} = \begin{cases}
      \mathcal{J} & \text{with probability $p$} \\
      0 & \text{with probability $1-p$}
   \end{cases}
\end{equation}
As before, the $i>j$ entries are fixed to be $J_{ji} = J_{ij}$, and the diagonal entries are zero. Thus, $J$ is proportional to the adjacency matrix for an Erdős-R\'{e}nyi random graph \cite{erdHos1960evolution}.

In the limit where $1 \gg p \gg N^{-1/3}$, the eigenvalues of $J$ have been shown to obey a semi-circle law \cite{erdHos2013spectral}. The first $(N-1)$ eigenvalues form the bulk of the spectrum, and lie within the strip $[-R,R]$, with
\begin{equation}
    R = \frac{2 q \mathcal{J}}{\gamma} \,,
\end{equation}
where $q^2 := p N$ and $\gamma := (1- q^2/N)^{-1/2}$.\footnote{Although $\gamma \rightarrow 1$ as $N \rightarrow \infty$, we include it here to bring the analytic prediction closer to the numerical results we find for finite $N$.} The largest eigenvalue is separated from the bulk eigenvalues due to the fact that the matrix has a non-zero mean, and the average value is
\begin{equation}
    \mathbb{E}_{J} \lambda_{N}(J) = \mathcal{J} (\gamma^{-2} + q^2)  \,.
\end{equation}
Thus, in this limit the mean-field and convex temperatures may be worked out to be:
\begin{equation}
    T_{\text{mean-field}} = \left(\gamma^{-2} + q^2 \right) \mathcal{J} \,, \qquad  T_{\text{convex}} = \left(\gamma^{-1} + q \right)^2 \mathcal{J} \,.
\end{equation}
Note that in this example $T_{\text{convex}} \neq 2\, T_{\text{mean-field}}$, which is due to the fact that the eigenvalues of $J$ do not lie in a symmetric interval. The first $(N-1)$ eigenvalues do, but the largest eigenvalue spoils the symmetry.

\begin{figure}
\centering
  \includegraphics[width=\linewidth]{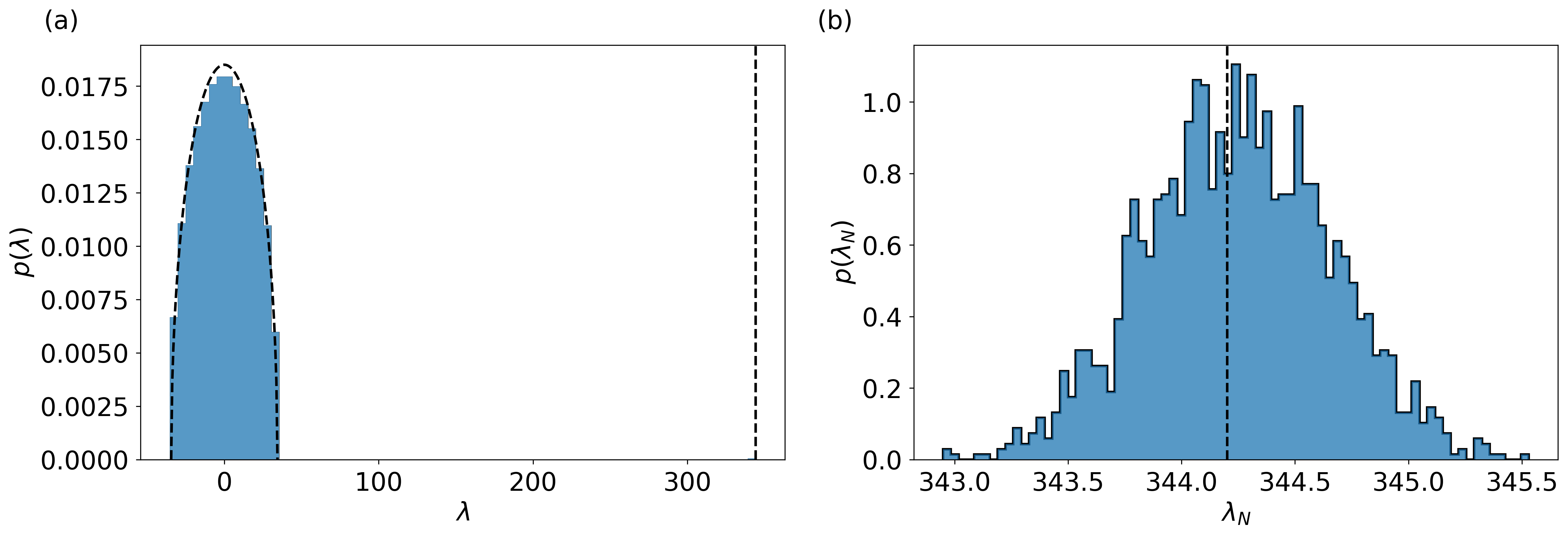}
  \captionof{figure}{
  (a) The eigenvalue distribution for a random draw the coupling matrix from an Erdős-R\'{e}nyi distribution with $N = 4000$, $q = 1.2 N^{1/3}$, and $\beta \mathcal{J} = 1$. The analytic prediction of \cite{erdHos2013spectral} is depicted by a dashed line. There is a single eigenvalue separated from the bulk at the location of the vertical dashed line.
  (b) The largest eigenvalue of the coupling matrix $J$ for 2000 draws of the Erdős-R\'{e}nyi random graph, with $N = 4000$, $q = 1.2 N^{1/3}$, and $\beta \mathcal{J} = 1$. The dashed line represents the analytic prediction.
  }
  \label{fig:ER_eigs}
\end{figure}

\section{Discussion \label{sec:conclusion}}
In this work, we have presented a probability density theory of general spin-glass systems. This formulation builds off of the continuous relaxation method of \cite{zhang2012continuous}, who introduced the method as a way to apply Hamiltonian Monte Carlo to energy-based models defined over discrete variables. Our original motivation for this work was to adopt a similar approach, and use their continuous relaxation method as a way to recast spin-glasses in terms of continuous variables so that normalizing flows could then be trained to model and better sample spin-glass physics. This is done in our companion paper \cite{hartnett2020selfsupervised}, and we have instead devoted this work to the task of further developing the continuous relaxation method for spin-glass physics and characterizing their physical properties in this picture.

One of the main advantages of our probability density formulation is the fact that $\mathcal{H}_{\beta}(x)$ furnishes a geometric encoding of complex spin-glass distributions. The critical points are of paramount importance for understanding this encoding. For example, the topology of the manifold defined by considering all points with energy equal to or below some reference value, i.e. ${\mathcal{M}^E := \{ x: \mathcal{H}_{\beta}(x) \le E\}}$, can be determined using knowledge of the critical points and Morse theory \cite{milnor2016morse}. As the energy is varied, the topology of $\mathcal{M}^E$ is unchanged unless a critical point is crossed, in which case the topology changes by the addition of a $\gamma$-cell, where $\gamma$ is the index of the critical point. Unfortunately, in none of the examples considered here were we able to identify all the critical points. It may be possible to make progress on this for the mean-field models like the SK model.

While we have not fully mapped out the geometry of the energy landscape, our results combine to form an interesting picture. For $T > T_{\text{convex}}$, the Hamiltonian density is convex and there is a single global minimum at the origin. As the temperature is lowered below $T_{\text{convex}}$, the energy landscape becomes non-convex. We have shown that this transition is accompanied by 1) the appearance of a pair of new critical points which split off from the origin, and 2) the the transition of the origin from being a minimum to being an unstable critical point (i.e. the Hessian at $x=0$ acquires a negative eigenvalue). In all the examples we considered $T_{\text{convex}}$ exceeds the critical temperature of any phase transition which may be present. We suspect that this statement is universally true unless $\Delta = 0$ and the phase transition is able to be captured by naive mean-field theory, since in that case $T_{\text{mean-field}} = T_{\text{convex}}$, otherwise $T_{\text{mean-field}} < T_{\text{convex}}$. Assuming that the model possesses a spin-glass phase transition, then as the temperature is lowered further, as far as $T \le T_{\text{crit}} < T_{\text{convex}}$, a number of metastable states appear. For the SK model, these are given by solutions of the TAP equation, and they are clearly distinct from the critical points. However, in the limit $T \rightarrow 0$, the equation satisfied by the metastable states simplifies to $x = \text{sgn}(J x)$, and we showed that solutions of this equation are also critical points of the zero-temperature Hamiltonian density. Therefore, at zero-temperature the energy landscape of the Hamiltonian density encodes the metastable states of the spin-glasss.

Our probability density formulation is quite general and applies to any system of Ising spin variables whose Hamiltonian consists of just 1- and 2-spin interactions (i.e. the family represented by Eq.~\ref{eq:Hspinglass}). It would be interesting to extend our approach to more general Hamiltonians consisting of higher spin couplings. However, it is worth noting that even the more restrictive family of Hamiltonians considered here is already universal. In particular, it includes the energy function of RBMs, which were shown to be universal approximators capable of approximating any discrete distribution to arbitrary accuracy, provided there are enough hidden units \cite{le2008representational}. Therefore, our formulation can be used to convert arbitrarily complex distributions over discrete variables into continuous probability densities. Moreover, a whole suite of algorithms and numerical methods designed for systems of continuous variables may now be applied to discrete problems. It will be interesting to understand what properties of a spin-glass system determine whether a continuous or discrete representation needs to be employed. We hope to report progress on this question in the future.

\subsubsection*{Acknowledgments}
We would like to thank S. Isakov, D. Ish, K. Najafi, and E. Parker for useful discussions and comments on this manuscript. We would also like to thank the organizers of the workshop  \textit{Theoretical Physics for Machine Learning}, which took place at the Aspen Center for Physics in January 2019, for stimulating this collaboration and project.

\appendix

\section{Convexity of the Hamiltonian density \label{sec:convex}}
In this appendix we prove that the Hamiltonian density is convex if and only if $T > T_{\text{convex}}$, with $T_{\text{convex}} = \lambda_N(\tilde{J})$. Using our conventions, in \cite{zhang2012continuous} it was proven that $p(x)$ for $\beta = 1$ is log-concave if and only if the eigenvalue spectrum of $\tilde{J}$ is sufficiently narrow, by which we mean
\begin{equation}
    \label{eq:narrowband}
    0 < \lambda_i( \tilde{J} ) < 1 \,, \quad \forall i \in \{1, ..., N\} \,.
\end{equation}
Note that the left-hand inequality of Eq.~\ref{eq:narrowband} is true by construction, since the shift $\Delta$ was chosen so as to make $\tilde{J} $ positive definite. Thus, the spectrum will be narrow if we additionally have that $\lambda_N(\tilde{J}) < 1$. Here we repeat the proof of \cite{zhang2012continuous} for the case of Ising spin variables $s \in \{-1,1\}^N$ and our parametrization. Also, rather than considering the log-concavity of $p(x)$, we shall instead consider the equivalent condition of the convexity of $\mathcal{H}_{\beta}(x)$.

In the proof we will make use of the relations $\lambda_1(-M) = -\lambda_N(M)$ and $\lambda_1(M^{-1}) = \lambda_N(M)^{-1}$, which hold for $M = \tilde{J}$ and for $M = S(x)$, and we will also use the following eigenvalue inequalities for two matrices $A$, $B$:
\begin{equation}
    \lambda_1(A) + \lambda_1(B) \le \lambda_1(A + B) \le \lambda_1(A) + \lambda_N(B) \,.
\end{equation}
We will also need the Hessian $K_{ij}(x) := \partial_i \partial_j \mathcal{H}_{\beta}(x)$, which is
\begin{equation}
    K(x) = \tilde{J} -\beta \tilde{J} S(x) \tilde{J} \,,
\end{equation}
where $S(x)$ is the diagonal matrix given by $S_{ij}(x) := \text{sech}^2( \beta \tilde{h}_i(x) ) \delta_{ij}$. We will find it useful to work with the matrix $\tilde{K}(x) := \beta^{-1} \tilde{J}^{-1} K(x) \tilde{J}^{-1}$, which is equal to
\begin{equation}
    \tilde{K}(x) = (\beta \tilde{J})^{-1} - S(x) \,.
\end{equation}
Since $K$ and $\tilde{K}$ are congruent, they have the same numbers of positive, negative, and zero eigenvalues according to Sylvester's Law of Inertia \cite{sylvester1852xix}.
\newline
\begin{proposition}
\label{prop:narrowband}
The Hamiltonian density $\mathcal{H}_{\beta}(x)$ is convex if and only if $\beta \tilde{J}$ has a narrow spectrum, by which we mean $\lambda_{N}(\beta \tilde{J}) < 1$. This is equivalent to $T_{\text{convex}} < T$.
\end{proposition}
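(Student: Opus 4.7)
The plan is to recast convexity of $\mathcal{H}_\beta(x)$ as a positive semi-definiteness condition on the Hessian $K(x)$, uniform in $x$, and then exploit the congruence with $\tilde{K}(x)=(\beta\tilde{J})^{-1}-S(x)$ noted in the preamble so that the question reduces to a clean competition between the fixed spectrum of $(\beta\tilde{J})^{-1}$ and the $x$-dependent diagonal matrix $S(x)$ whose entries lie in $(0,1]$.

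First I would record that $\mathcal{H}_\beta$ is convex if and only if $K(x)\succeq 0$ for every $x\in\mathbb{R}^N$, and by Sylvester's Law of Inertia this is equivalent to $\tilde{K}(x)\succeq 0$ for every $x$, since $\tilde{K}(x)=\beta^{-1}\tilde{J}^{-1}K(x)\tilde{J}^{-1}$ is a congruence transformation (using that $\tilde{J}$ is symmetric and invertible). Thus the question becomes: for which $\beta$ is $(\beta\tilde{J})^{-1}-S(x)\succeq 0$ for all $x$?

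For the sufficiency direction (convexity when $\lambda_N(\beta\tilde{J})<1$), I would apply the eigenvalue inequality
\begin{equation*}
\lambda_1\bigl(\tilde{K}(x)\bigr)\ \ge\ \lambda_1\bigl((\beta\tilde{J})^{-1}\bigr)+\lambda_1\bigl(-S(x)\bigr)\ =\ \frac{1}{\lambda_N(\beta\tilde{J})}-\lambda_N(S(x)).
\end{equation*}
Since $S(x)$ is diagonal with entries $\operatorname{sech}^2(\beta\tilde{h}_i(x))\in(0,1]$, we have $\lambda_N(S(x))\le 1$ uniformly in $x$, so $\lambda_1(\tilde{K}(x))\ge \lambda_N(\beta\tilde{J})^{-1}-1>0$ whenever $\lambda_N(\beta\tilde{J})<1$. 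This gives strict convexity of $\mathcal{H}_\beta$, hence convexity.

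For the necessity direction, I would argue contrapositively: assume $\lambda_N(\beta\tilde{J})\ge 1$ and exhibit a point $x^\star$ and direction $v$ with $v^T\tilde{K}(x^\star)v\le 0$ (strict if $\lambda_N(\beta\tilde{J})>1$). The natural choice is $x^\star=-\tilde{J}^{-1}h$, which makes $\tilde{h}_i(x^\star)=0$ for all $i$ and hence $S(x^\star)=I_N$; taking $v$ to be the top eigenvector of $\beta\tilde{J}$ gives $v^T\tilde{K}(x^\star)v=\|v\|^2\bigl(\lambda_N(\beta\tilde{J})^{-1}-1\bigr)\le 0$. Combining the two directions yields the equivalence $\lambda_N(\beta\tilde{J})<1\iff \mathcal{H}_\beta$ convex, which rearranges to $T>\lambda_N(\tilde{J})=T_{\text{convex}}$. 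The main subtlety I expect is the "for all $x$" quantifier in the necessity direction: one must confirm that $S(x)=I$ is actually attained (not merely approached as a supremum), which is why the explicit choice $x^\star=-\tilde{J}^{-1}h$—available because $\tilde{J}$ was constructed to be invertible—does the real work.
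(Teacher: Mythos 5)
Your proposal is correct and follows essentially the same route as the paper's proof in Appendix~\ref{sec:convex}: the congruence $\tilde{K}(x)=(\beta\tilde{J})^{-1}-S(x)$ with Sylvester's Law of Inertia, the Weyl-type eigenvalue bound for sufficiency, and the special point $x^{*}=-\tilde{J}^{-1}h$ where $S(x^{*})=I_N$ for necessity (you evaluate the quadratic form on the top eigenvector where the paper uses the inequality $\lambda_1(A+B)\le\lambda_1(A)+\lambda_N(B)$, but the content is identical). The boundary subtlety you flag at $\lambda_N(\beta\tilde{J})=1$, where your argument gives only $v^{T}\tilde{K}(x^{*})v=0$, is present in the paper as well, which handles it by taking the hypothesis of the reverse direction to be $0<\inf_x\lambda_1(\tilde{K}(x))$, i.e.\ uniform strict convexity.
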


\begin{proof}
For the forward direction, assume that $\lambda_{N}(\beta \tilde{J}) < 1$. Then
\begin{equation}
    \lambda_{1} \left( \tilde{K}(x) \right) = \lambda_{1} \left( (\beta \tilde{J})^{-1} - S(x) \right) \ge \lambda_{1} \left( (\beta \tilde{J})^{-1} \right) + \lambda_{1}\left(-S(x) \right) \ge \lambda_{N} ( \beta \tilde{J} )^{-1} - 1 > 0 \,.
\end{equation}
where we have used $\lambda_1(-S(x)) \ge -1$ in the penultimate step, and the last inequality follows by assumption. Since the smallest eigenvalue of the Hessian $\tilde{K}(x)$ is everywhere positive, the Hamiltonian density is convex.

To prove the reverse direction, assume that $0 < \inf_x \lambda_{1}(\tilde{K}(x))$ and let $x^* = - \tilde{J}^{-1} h$. Then,
\begin{align}
    0 &< \inf_x \lambda_{1}\left( \tilde{K}(x) \right) \le \lambda_{1}\left( \tilde{K}(x^*) \right) \le \lambda_{1}\left( (\beta \tilde{J})^{-1} \right) + \lambda_{N}\left( -S(x^*) \right) =  \lambda_{N}(\beta \tilde{J} )^{-1} - 1 \,,
\end{align}
Therefore, $\lambda_{N}(\beta \tilde{J}) < 1$.
\end{proof}

\section{High-temperature expansion of the SK model \label{sec:highTSK}}
According to Eq.~\ref{eq:Zx}, the discrete partition function is related to the continuous partition function via:
\begin{equation}
    Z_s = \frac{e^{-\frac{N \beta \Delta}{2}} Z_x}{\sqrt{(2\pi)^N \det( (\beta \tilde{J})^{-1})}} \,.
\end{equation}
The denominator is just the normalization of a Gaussian distribution with zero mean and covariance matrix $(\beta \tilde{J})^{-1}$. Similarly, $Z_x$ may also be written in terms of an un-normalized Gaussian integral with the same mean and covariance:
\begin{equation}
    Z_x = \int \mathrm{d}^N x \, e^{-\beta \mathcal{H}_{\beta}(x)} = \int \mathrm{d}^N x \, e^{-\frac{\beta}{2} x^T \tilde{J} x} \prod_i 2\cosh\left(\beta (\tilde{J} x)_i \right) \,.
\end{equation}
Thus, after rescaling $x \rightarrow \beta^{-1/2} x$ the partition function may be written as
\begin{equation}
    Z_s = 2^N e^{-\frac{N \beta \Delta}{2}} \left\langle \prod_i \cosh\left( \beta^{1/2} (\tilde{J} x)_i \right) \right\rangle_0 \,.
\end{equation}
where $\langle \cdot \rangle$ denotes an average with respect to the Gaussian with covariance matrix $\Sigma = \tilde{J}^{-1}$. The expansion around $\beta = 0$ may now be carried out. Explicitly, to fourth order we have
\begin{equation}
	Z_s = 2^N e^{-\frac{N \beta \Delta}{2}} \left\langle \prod_i \left(1 + \frac{\beta}{2} (\tilde{J} x)_i^2 + \frac{\beta^2}{24} (\tilde{J} x)_i^4 + \frac{\beta^3}{720} (\tilde{J} x)_i^6 + \frac{\beta^4}{40320} (\tilde{J} x)_i^8 + \mathcal{O}(\beta^{5}) \right) \right\rangle_0 \,,
\end{equation}
and expanding out the product yields
\begin{align}
    \label{eq:big_expansion}
	Z_s  2^{-N} e^{\frac{N \beta \Delta}{2}} = 1 &+ \frac{\beta}{2} \sum_i \langle (\tilde{J} x)_i^2 \rangle_0 + \beta^2 \Bigg( \frac{1}{24} \sum_i \langle (\tilde{J} x)_i^4 \rangle_0 + \frac{1}{8} \sum_{[ij]} \langle (\tilde{J} x)_i^2 (\tilde{J} x)_j^2 \rangle_0 \Bigg) + \\
	&+ \beta^3 \Bigg( \frac{1}{720} \sum_i \langle (\tilde{J} x)_i^6 \rangle_0 + \frac{1}{48} \sum_{[ij]} \langle (\tilde{J} x)_i^2 (\tilde{J} x)_j^4 \rangle_0 + \frac{1}{48} \sum_{[ijk]} \langle (\tilde{J} x)_i^2 (\tilde{J} x)_j^2 (\tilde{J} x)_k^2 \rangle_0 \Bigg) \nonumber \\
	&+ \beta^4 \Bigg( \frac{1}{40320} \sum_i \langle (\tilde{J} x)_i^8 \rangle_0 + \frac{1}{1440} \sum_{[ij]} \langle (\tilde{J} x)_i^2 (\tilde{J} x)_j^6 \rangle_0 + \frac{1}{1152} \sum_{[ij]} \langle (\tilde{J} x)_i^4 (\tilde{J} x)_j^4 \rangle_0 \nonumber \\
	& \qquad \quad + \frac{1}{192} \sum_{[ijk]} \langle (\tilde{J} x)_i^2 (\tilde{J} x)_j^2 (\tilde{J} x)_k^4 \rangle_0 + \frac{1}{384} \sum_{[ijkl]} \langle (\tilde{J} x)_i^2 (\tilde{J} x)_j^2 (\tilde{J} x)_k^2  (\tilde{J} x)_l^2 \rangle_0  \Bigg) + \mathcal{O}(\beta^{5}) \,. \nonumber
\end{align}
Here, the notation $[i_1 i_2 ... i_n]$ means that all the indices are distinct.

The Gaussian integrals can be done via Wick contractions, each of which brings in a factor of $\tilde{J}^{-1}$. Working out the first few terms explicitly, one finds a proliferation of terms involving powers of $\Delta$. These terms may be summed to give $e^{\frac{N \beta \Delta}{2}}$, which cancels the term on the RHS of Eq.~\ref{eq:big_expansion}. Of course, this had to be the case because $Z_s$ is independent of $\Delta$. Carrying out the expansion to fourth order, we find that
\begin{equation}
    \ln Z_s = N \ln 2 + \frac{\beta^2}{4} \sum_{[ij]} J_{ij}^2 + \frac{\beta^3}{6} \sum_{[ijk]} J_{ij} J_{jk} J_{ki} + \frac{\beta^4}{8} \left( \sum_{[ijkl]} J_{ij} J_{jk} J_{kl} J_{li} - \frac{1}{3} \sum_{[ij]} J_{ij}^4 \right) + \mathcal{O}(\beta^5) \,.
\end{equation}
Taking the disorder average yields:
\begin{equation}
    \label{eq:expansionpartial}
    \langle \ln Z_s \rangle_J = N \left(\ln 2 +  \frac{1}{4} (\beta \mathcal{J})^2\right) - \frac{1}{4}\left( (\beta \mathcal{J})^2 + \frac{1}{2} (\beta \mathcal{J})^4 + \mathcal{O}(\beta^5) \right) + \mathcal{O}(N^{-1})\,.
\end{equation}

With enough effort, the expansion may be extended to arbitrary order. To facilitate comparison with the result obtained by expanding in terms of the discrete variables performed by TAP \cite{thouless1977solution}, we will note that TAP did not keep track of each term in the expansion. They worked out all terms which contribute at $\mathcal{O}(N)$, and at sub-leading order $\mathcal{O}(1)$ they restricted their attention to only those terms in the series which contribute to a singularity at the spin-glass phase transition. Concretely, they found:
\begin{equation}
    \label{eq:TAPresult}
    \langle \ln Z_s \rangle_J = N\left(\ln 2 + \frac{1}{4} (\beta \mathcal{J})^2 \right) + \frac{1}{4} \ln\left(1-\beta^2 \mathcal{J}^2 \right) + \text{(non-singular)} + \mathcal{O}(N^{-1}) \,.
\end{equation}
Our calculation has already reproduced the extensive term. Next we will show that the expansion in terms of the $x$ variables also matches the singular logarithm term. For this purpose we will restrict attention to just those terms which involving $n$ distinct copies of $(\tilde{J} x)^2$. Using the notation $A \supset B$ to indicate that the expansion for $A$ contains the expansion $B$, the series restricted to these terms is:
\begin{equation}
    Z_s \supset 2^N \sum_{n=3}^{\infty} \frac{\beta^{n}}{2^n n! } \sum_{[i_1 ... i_n]} \langle (J x)_{i_1}^2 (J x)_{i_2}^2 ... (J x)_{i_n}^2 \rangle_0
\end{equation}
There are $(2n-1)!!$ different Wick contractions to consider. Of these, there are $(2n-2)!!$ contractions that avoid pairing $x$'s connected through a coupling matrix. Thus the contribution of the connected cyclic terms at each order is given by:
\begin{equation}
    Z_s \supset 2^N \sum_{n=3}^{\infty} \frac{\beta^{n}}{2n} \sum_{[i_1 ... i_n]} J_{i_1 i_2} J_{i_2 i_3} ... J_{i_n i_1} \,.
\end{equation}
The series contains additional terms that result from other contractions, but these are either sub-leading in $1/N$ or do not contribute to the singularity. The contribution of the cyclic terms above may be represented diagrammatically as regular $n$-sided polygons diagrams, where each side represents a factor of the coupling matrix
\begin{equation}
    Z_s \supset \Ngon{3}+\Ngon{4}+\Ngon{5}+\Ngon{6} + \cdots
\end{equation}

The next step is to take the logarithm and perform the disorder average. The logarithm introduces additional terms at each order, although many of these vanish at leading order in $1/N$ after taking the disorder average. Among the terms which survive are the squares of the above polygon terms:
\begin{equation}
	\langle \ln Z_s \rangle_J \supset - \frac{1}{2} \sum_{n=3}^{\infty} \frac{\beta^{2n}}{(2n)^2} \left\langle \left( \sum_{[i_1 ... i_n]} J_{i_1 i_2} J_{i_2 i_3}... J_{i_n i_1} \right)^2 \right\rangle_J \,.
\end{equation}
The disorder average may also be performed using Wick contractions. The only non-vanishing contractions are those where each distinct factor of the coupling $J_{ij}$ appears squared. Diagrammatically, this means that the contribution of these terms corresponds to the double-sided regular polygons:
\begin{equation}
    \langle \ln Z_s \rangle_J \supset \Ngondouble{3} + \Ngondouble{4} + \Ngondouble{5} + \Ngondouble{6} + \cdots
\end{equation}
To count the number of each term, note that there are two cyclic groupings of matrices which must be contracted with one another: $J_{i_1 i_2} J_{i_2 i_3} ... J_{i_n i_1}$ and $J_{j_1 j_2} J_{j_2 j_3} ... J_{j_n j_1}$. Each $J$ from the first group will be contracted with a $J$ from the second. With no loss of generality, the ordering of the first cycle can be fixed. There are then $(n-1)!$ ways to order the second cycle. However, this over counts by a factor of 2 because the direction of the cycle is irrelevant. So, the symmetry factor for the $n$-th diagram is $(n-1)!/2$. There is also a factor of $\binom{N}{n}$ corresponding to the number of choosing $n$ distinct sites to form a cycle. The end result is that the double-sided polygons give a contribution of:
\begin{equation}
    \langle \ln Z_s \rangle_J \supset -\frac{1}{4} \sum_{n=3}^{\infty} (n-1)! \binom{N}{n} (\langle J_{ij}^2 \rangle_J)^n = - \frac{1}{4} \sum_{n=3}^{\infty} \frac{(\beta \mathcal{J})^{2n}}{n} + \mathcal{O}(N^{-1}) \,.
\end{equation}
This series just corresponds to $-\ln(1-\beta^2 \mathcal{J}^2)/4$, minus the $n=1$ and $n=2$ terms. Adding this result to the previous result of Eq.~\ref{eq:expansionpartial} reproduces the expression TAP found in \cite{thouless1977solution}, which we have reproduced in Eq.~\ref{eq:TAPresult}.

Therefore, we have found that, regardless of which formulation is used, the disorder-averaged high-temperature expansion produces an extensive $\mathcal{O}(N)$ result which is valid above the spin-glass phase transition temperature. At sub-leading order $\mathcal{O}(1)$ the expansion also contains an infinite number of cyclic terms which diagrammatically correspond to regular polygons. These terms may be re-summed to find a contribution which becomes singular at the phase transition, indicating that the perturbative expansion has broken down. We see no indication that $T_{\text{convex}}$ has any particular significance whatsoever in the partition function. Indeed, $T_{\text{convex}}$ depends on $\Delta$, a parameter introduced as part of the definition of the continuous formulation, whereas $Z_s$ does not. Lastly, we note that the two partition functions $Z_s$ and $Z_x$ are proportional, and that the constants of proportionality are completely well-behaved at $T = T_{\text{convex}}$. Thus, we can conclude that the convex/non-convex transition does not correspond to any sort of phase transition or non-analyticity in either partition function.

\section{2d Ising model}\label{sec:examples}
As an additional example, we consider the phase transition of the well-studied ferromagnetic Ising model defined over the 2-dimensional square lattice with periodic boundary conditions. The probability density formulation of this model was used in \cite{li2018neural} as the first step towards modeling the system with normalizing flows \cite{rezende2015variational, dinh2016density} near the paramagnetic/ferromagetic phase transition.

In this case the eigenvalues may be worked out analytically. For a $d$-dimensional hypercubic lattice with $L$ spins per dimension, the eigenvalues are given by\footnote{See for example Sec. 2.2 of \cite{altland2010condensed}.}
\begin{equation}
    \lambda(J) = 2 \mathcal{J} \sum_{\mu =1}^d \cos\left( \frac{2\pi}{L} n_{\mu} \right) \,, \quad n_{\mu} \in \{0,1,...,L-1\} \,.
\end{equation}
Where $\mathcal{J}$ is the bond strength. Thus,
\begin{equation}
    \lambda_N(J) = 2 d \mathcal{J} \,, \qquad
    \lambda_1(J) =
    \begin{cases}
      -2 d \mathcal{J} & \text{$L$ even} \\
      2 d \mathcal{J} \cos\left(\pi \left(1 - L^{-1} \right) \right) & \text{$L$ odd}
   \end{cases}
\end{equation}
In the large-$L$ limit, the difference between even and odd $L$ vanishes, and the eigenvalues lie within the symmetric interval $[-R,R]$ with $R = 2 d \mathcal{J}$. As a result, for $d=2$,
\begin{equation}
    T_{\text{mean-field}} = 4 \,\mathcal{J} \,, \qquad T_{\text{convex}} = 2 \,T_{\text{mean-field}} \,.
\end{equation}
Both of these are greater than the critical temperature, which is well known to be
\begin{equation}
    T_{\text{crit}} = \frac{2 \mathcal{J}}{\ln\left(1 + \sqrt{2}\right)} \approx 2.2692 \mathcal{J} \,.
\end{equation}
Thus, as the temperature is lowered the Hamiltonian density becomes non-convex well before the phase transition.

\bibliographystyle{JHEP}
\bibliography{refs}

\end{document}